\newtheorem{theorem}{Theorem}
\newtheorem*{Game*}{Game}
\newtheorem*{TokGen*}{TokenGeneration Phase}
\newtheorem*{TokVer*}{TokenVerification Phase}
\newtheorem*{Setup*}{Setup}
\newtheorem*{Query*}{Query}
\newtheorem*{Challenge*}{Challenge}
\newtheorem*{Guess*}{Guess}
\theoremstyle{remark}
\newcounter{protocol}
\begin{document}

\title{Privacy-Preserving IoT in Connected Aircraft Cabin}

\author{Nilesh~Vyas,
        Benjamin~Zhao,
        Aygün~Baltaci,
        Gustavo~de~Carvalho~Bertoli, 
        Hassan~Asghar,
        Markus~Klügel,
        Gerrit~Schramm,
        Martin~Kubisch,
        and~Dali~Kaafar
\thanks{Nilesh Vyas, Aygün Baltaci, Gustavo de Carvalho Bertoli,  Markus~Klügel,
        Gerrit~Schramm,  and  Martin~Kubisch, are with Airbus Central R\&T, Germany (e-mail: nilesh.vyas@airbus.com; ayguen.baltaci@airbus.com; gustavo.bertoli@airbus.com; markus.kluegel@airbus.com; gerrit.schramm@airbus.com; martin.kubisch@airbus.com).}
\thanks{Benjamin Zhao, Hassan Asghar, and Dali Kaafar are with the School of Computing, Macquarie University, Sydney, NSW 2109, Australia (e-mail: ben\_zi.zhao@mq.edu.au; hassan.asghar@mq.edu.au; dali.kaafar@mq.edu.au).}}

\maketitle

\begin{abstract}
The proliferation of IoT devices in shared, multi-vendor environments like the modern aircraft cabin creates a fundamental conflict between the promise of data collaboration and the risks to passenger privacy, vendor intellectual property (IP), and regulatory compliance. While emerging standards like the Cabin Secure Media-Independent Messaging (CSMIM) protocol provide a secure communication backbone, they do not resolve data governance challenges at the application layer, leaving a privacy gap that impedes trust. This paper proposes and evaluates a framework that closes this gap by integrating a configurable layer of Privacy-Enhancing Technologies (PETs) atop a CSMIM-like architecture. We conduct a rigorous, empirical analysis of two pragmatic PETs: Differential Privacy (DP) for statistical sharing, and an additive secret sharing scheme (ASS) for data obfuscation. Using a high-fidelity testbed with resource-constrained hardware, we quantify the trade-offs between data privacy, utility, and computing performance. Our results demonstrate that the computational overhead of PETs is often negligible compared to inherent network and protocol latencies. We prove that architectural choices, such as on-device versus virtualized processing, have a far greater impact on end-to-end latency and computational performance than the PETs themselves. The findings provide a practical roadmap for system architects to select and configure appropriate PETs, enabling the design of trustworthy collaborative IoT ecosystems in avionics and other critical domains.
\end{abstract}

\begin{IEEEkeywords}
Aircraft, Cabin, IoT, Privacy Enhancing Technologies, Differential Privacy, Secret Sharing
\end{IEEEkeywords}

\section{Introduction}

The operational paradigms of traditionally isolated, high-stakes environments such as commercial aviation are undergoing a profound transformation, driven by the integration of Internet of Things (IoT) technologies \cite{wang2022intelligent}. The aircraft cabin serves as a canonical example of this new paradigm—a self-contained micro-ecosystem where systems from multiple, often competing, vendors must coexist and operate on shared infrastructure \cite{giertzsch2024design}. This drive towards interconnectedness creates a fundamental duality of collaboration and conflict. On one hand, data sharing can enable collaborative predictive maintenance and highly personalized passenger experiences. On the other hand, this collaboration comes with inherent risks to passenger privacy under regulations \cite{gdpr, ccpa, dpdpa_india, pipeda, pipl_china, lgpd},  and the potential leakage of vendor-specific intellectual property (IP) \cite{rencher2024considerations}.

To address the foundational need for interoperability, the aviation industry has recently introduced a standard called the Cabin Secure Media-Independent Messaging (CSMIM) protocol, specified in ARINC 853 \cite{arinc853}. CSMIM provides a standardized and secure messaging layer, using protocols such as MQTT and Concise Binary Object Representation (CBOR) to simplify the integration of various cabin devices \cite{park2021wireless}. However, CSMIM's security guarantees focus primarily on the transport of data, providing mechanisms for device authentication and access control \cite{arinc853}. It does not address
the privacy of the data content itself. An authorized subscriber can still receive sensitive data in the clear, creating a privacy gap at the application layer that represents a significant barrier to trust and adoption \cite{rencher2024considerations}. This gap is not a flaw in the standard but rather a deliberate design choice that prioritizes establishing a common, secure transport layer as a first step. This reframes the challenge not as fixing a broken standard, but as building the necessary next layer to enable trustworthy data collaboration.

This paper argues that an application-level Privacy-Enhancing Technologies (PETs) layer is essential to close this privacy gap. Furthermore, this layer must be flexible, allowing system architects to navigate the complex requirements of privacy, data utility, and real-time performance. To this end, this work makes the following contributions:
\begin{enumerate}[leftmargin=*]
\item We formalize the multifaceted challenges (technical, business, and regulatory) of multi-vendor data sharing in a constrained environment like an aircraft cabin.
\item We propose a flexible framework for integrating and evaluating distinct classes of PETs, specifically Differential Privacy (DP) and an additive secret sharing scheme (ASS) based on Shamir's secret sharing scheme, within a CSMIM-based architecture.
\item Our comprehensive empirical evaluation, based on a high-fidelity physical testbed that mimics an aircraft cabin, provides novel quantitative data on the utility-privacy trade-off and performance overheads of these technologies in a realistic setting.
\item We deliver a comparative analysis and a set of architectural recommendations to guide the practical design of trustworthy, privacy-preserving IoT systems in aviation and beyond.
\end{enumerate}

\section{Background and Related Work}
\label{sec:background}

\subsection{State-of-the-Art in Cabin IoT and CSMIM}
The evolution of aircraft cabins into smart, data-driven environments requires a shared communication platform to enable novel services like predictive maintenance and operational optimization \cite{rencher2024considerations}. To overcome historical integration challenges from proprietary protocols, the aviation industry developed the Cabin Secure Media-Independent Messaging (CSMIM) standard, specified in ARINC 853 \cite{arinc853}. CSMIM establishes a standardized, broker-based architecture using MQTTv5 to decouple data producers and consumers, enabling multi-vendor interoperability over any network medium, such as Ethernet or Wi-Fi \cite{giertzsch2024design}. The standard provides a secure transport layer with device authentication via mutual Transport Layer Security (TLS) and granular access control using topic-based Access Control Lists (ACLs).

Although specific to aviation, CSMIM’s broker-centric architecture is not unique. It represents a pattern of convergent evolution seen in other critical IoT domains facing similar multi-vendor interoperability issues \cite{privacy-interop,rencher2024considerations}. This architectural paradigm is mirrored in standards like the Industrial IoT's OPC UA and the oneM2M Common Service Layer, which fulfill analogous roles in their respective ecosystems \cite{OPCUA, privacy-interop, oneM2M}.

However, CSMIM’s security is limited to transport and access control, creating a critical privacy gap at the application layer where data content remains unprotected for authorized subscribers. This research is motivated by the need to integrate Privacy-Enhancing Technologies (PETs) to secure the data itself within a CSMIM-compliant system \cite{arinc853}. Due to the architectural similarities across high-assurance IoT systems, the findings on integrating PETs are broadly applicable beyond the aircraft cabin. This elevates this paper's findings from a niche aviation solution to a broadly applicable architectural pattern for securing critical IoT ecosystems.

\subsection{Survey of Privacy-Enhancing Technologies for IoT}
The field of privacy preservation offers a wide array of technologies, each with distinct strengths and weaknesses. 

\subsubsection{Established Cryptographic Techniques} Encryption is the bedrock of data security, used to protect data both in transit and at rest. Protocols like TLS, which underpin secure MQTT communication in CSMIM, rely on asymmetric and symmetric encryption. However, standard encryption requires data to be decrypted before it can be processed, creating a point of privacy leakage. Homomorphic Encryption (HE) provides a solution approach to solves this by allowing computations directly on encrypted data \cite{balle2019privacy, acar2018survey}. Although theoretically ideal, its immense computational overhead currently makes it unsuitable for the resource-constrained and latency-sensitive edge devices in an aircraft cabin.

\subsubsection{Privacy-Preserving Machine Learning Paradigms} Federated Learning (FL) has emerged as a popular paradigm for training a shared machine learning model across distributed clients without centralizing their raw data \cite{yang2019federated, beltran2023decentralized}. However, deploying FL effectively requires that the learning task, its participants, and the data distribution (i.e., a horizontal or vertical setting) are defined in advance. These elements are often not finalized during the early design phase of an aircraft cabin. Furthermore, FL is a specialized paradigm for model training and is less suited as a general-purpose PET for the diverse and often ad-hoc data flows evaluated in our testbed. FL is also not a panacea for privacy; sensitive information can still be leaked through model updates exchanged during training, and it faces significant challenges related to the statistical heterogeneity of data between clients and the communication overhead \cite{kairouz2021advances}.

\subsubsection{Anonymization and Obfuscation Techniques} Traditional data protection methods include k-anonymity and pseudonymization, which aim to prevent re-identification by generalizing or masking identifiers \cite{sweeney2002kanonymity}. While useful in some contexts, these techniques are known to be vulnerable to linkage attacks when an adversary has access to auxiliary information, and they can often degrade data utility to the point of uselessness \cite{balle2019privacy, machanavajjhala2006breaking}. More advanced cryptographic methods like Zero-Knowledge Proofs (ZKPs) allow a party to prove knowledge of a fact without revealing the underlying information \cite{buenz2020survey}. These are powerful techniques but are often tailored to specific verification tasks and can be too complex and computationally intensive for general-purpose data analysis on embedded devices \cite{buenz2020survey}.

\subsection{Positioning Our Work: The Research Gap}
While extensive research exists on individual PETs and their theoretical guarantees \cite{balle2019privacy, sweeney2002kanonymity, buenz2020survey}, a significant gap remains. There is a notable lack of comprehensive, comparative empirical analysis evaluating multiple PETs within a single, realistic framework that mirrors an emerging industry standard like ARINC 853. Much prior work relies on simulations, which often overlook real-world complexities.

This paper bridges that theory-practice gap by providing quantitative performance and utility data from a physical hardware testbed. This approach captures practical challenges like network latency and protocol overhead often abstracted away in theoretical work. Instead of advocating for a single solution, we evaluate a portfolio of pragmatic PETs—DP and an ASS scheme for their feasibility in resource-constrained environments. By doing so, this paper provides not an idealized answer but a nuanced, data-driven decision-making framework for engineers and architects building the next generation of secure and trustworthy private IoT systems.

\section{System Model and Problem Statement}
\label{sec:system}

\subsection{System Architecture and Parties}
We model our system with $n$ sensors, $\mathcal{S}_{i}$, each contributing data $x_i$ to a central server, $C$, which computes a function  $f(x_{1},...,x_{n})$ over the collected data. Communication occurs over one or more of $m$ available broadcast channels,  $\mathcal{B}_{j}$. Within a media-independent protocol like CSMIM, these channels can be realized over any underlying network medium, such as wired ethernet or distinct Wi-Fi channels. This model is instantiated in our experiments using a high-fidelity, three-layered architecture.

\subsubsection{Physical Layer} This layer comprises the hardware, including multiple ESP32 microcontrollers that act as data-producing sensor nodes, representing the diverse, resource-constrained IoT equipment in a cabin. A powerful central server hosts the system's core services, including the data broker and virtualized nodes for tasks like aggregation or machine learning.

\subsubsection{Communication Layer} At the heart of the architecture, a central On-board message broker functions as the message broker. Compliant with ARINC 853, it uses MQTTv5 to manage all data flows in a publish-subscribe model, decoupling producers from consumers and enabling multi-vendor interoperability. For efficiency on embedded devices, all payloads are serialized using the CBOR format.

\subsubsection{Security Layer} This layer is provided by the message broker, which enforces baseline security for the ecosystem. It mandates mutual TLS (mTLS) for all connections, requiring every client to present a valid certificate for authentication. Upon successful authentication, the message broker applies topic-based ACLs, which provide granular control over data flows by defining which message topics a given device is permitted to publish to or subscribe from.

\subsection{Threat Model}
We consider two primary non-colluding adversaries, defined by their capabilities and objectives within the CSMIM ecosystem:

\subsubsection{Honest-but-Curious Aggregator/Vendor} This adversary is a legitimate participant, such as a central server or another vendor's equipment. It correctly follows the communication protocol and possesses valid credentials to authenticate with the message broker. Its access refers to its authorized ability, granted by the message broker's ACLs, to subscribe to one or more MQTT topics. The threat is that this adversary will abuse its legitimate access to analyze the received data and infer the sensitive, raw data contribution $x_i$ from a specific sensor $\mathcal{S}_i$, going beyond the intended purpose of the data stream. This internal adversary is the more complex and commercially significant threat, as it challenges the very premise of data collaboration. The success of the ecosystem hinges on mitigating this internal threat, making PETs a business enabler, not just a security feature.

\subsubsection{Malicious Eavesdropper} This adversary $\mathcal{E}$, is an unauthorized actor physically present within the aircraft cabin who can passively capture wireless network traffic (e.g., 802.11 frames). While mTLS provides a strong primary defense, this adversary aims to exploit any weakness in the transport-level encryption to access the message payloads. We assume the eavesdropper cannot monitor all $m$ broadcast channels (e.g., distinct Wi-Fi channels or wired/wireless paths) simultaneously.

The primary goal of both adversaries is to deduce the raw data of a specific sensor.

\subsection{Data Encoding and Domain}
To handle real-valued sensor data in a digital system, we define a consistent encoding scheme. Let $\mathcal{D}\in\mathbb{R}$ be the domain of values for any data item $x_i$. Let $k$ be a desired precision level. We define: $q_{min} = \min_{x\in\mathcal{D}}\lfloor xk\rfloor$ and  $q = |q_{min}| + \max_{x\in\mathcal{D}}\lfloor xk\rfloor$. Any value $x\in\mathcal{D}$ is then encoded as an integer $x' = |q_{min}|+\lfloor xk\rfloor$, where $x' \in \{0,1,...,q\}$. The corresponding decoding operation for an encoded integer $y$ is $(y-|q_{min}|)/k$.

\subsection{Challenges in Collaborative Cabin Ecosystems}

Deploying a collaborative IoT ecosystem in an aircraft cabin introduces a unique triad of challenges spanning technical limitations, business risks, and stringent regulatory requirements.

\subsubsection{Technical Challenges}
The in-cabin environment imposes significant technical constraints. Avionics IoT devices are typically embedded systems with limited CPU, memory, and power \cite{arinc853, nbiot-aircraft}, making computationally intensive techniques like fully homomorphic encryption impractical for real-time applications \cite{balle2019privacy, acar2018survey}. Furthermore, many non-safety-critical cabin applications, such as crew alerts, still require predictable, low-latency responses, meaning the overhead from any PET must be carefully bounded \cite{dwork2006ourdata, dwork2006calibrating, dwork2014algorithmic}. These constraints create a fundamental tension between privacy and utility; stronger guarantees, like adding more noise in differential privacy, inherently degrade data quality, which can impair predictive models or bias reports. Navigating this trade-off is a central design challenge \cite{zhao2020notone}.

\subsubsection{Business and Intellectual Property (IP) Challenges}
Beyond technical hurdles, significant business concerns inhibit data sharing between vendors. A primary deterrent is the risk of exposing proprietary algorithms, operational data, or unique IP to competitors on a shared data bus \cite{arinc853, rencher2024considerations}. Moreover, in a data-driven economy, a vendor's value is often tied to the data its products generate. The inability to control how data is used after being shared (the Copy Problem) devalues these assets and creates a major disincentive for participation.

\subsubsection{Regulatory and Compliance Challenges}
The aviation industry's stringent regulatory framework extends to data handling. Compliance with regulations \cite{gdpr, ccpa, dpdpa_india, pipeda, pipl_china, lgpd} hinges on the principle that data must not be directly linkable to any single individual. Simultaneously, any new software or hardware is subject to rigorous airworthiness certification processes \cite{arinc853, rtca_do326b}. Consequently, the PETs themselves and the underlying hardware must meet high standards of verifiability and robustness \cite{civilavionics-ch6}.

\section{Technical Approach: The Integrated PETs Layer}
\label{sec:technical}
Our framework integrates a portfolio of PETs, each chosen for its unique privacy properties and its feasibility within the constrained aviation environment.

\subsection{Differential Privacy for Statistical Anonymity}
Differential Privacy (DP) offers a rigorous, mathematical standard for privacy by ensuring an algorithm's output is insensitive to the inclusion or exclusion of any single individual's data \cite{dwork2014algorithmic}. The privacy parameter, $\epsilon$, formally bounds this insensitivity.

\subsubsection{Local Differential Privacy (LDP)} In this model, noise is added directly at the sensor node $\mathcal{S}_i$ before data transmission. Each sensor perturbs its own encoded value $x'_i$ using a DP mechanism (e.g., randomized response or the Laplace mechanism) to produce $x''_i$. This perturbed value is then sent to the server $C$. This provides robust privacy against both the server $C$ and the malicious eavesdropper $\mathcal{E}$.

\subsubsection{Global Differential Privacy (GDP)} In this model, raw (or encoded) data $x'_i$ from multiple devices is collected at a trusted aggregation node. The aggregation function (e.g., sum) is computed, and then noise is added to the final result. This preserves higher utility as noise is scaled to the function's sensitivity, but it requires trust in the aggregator.

\subsection{Additive Secret Sharing for Data Obfuscation in Transit}
Shamir's Secret Sharing is a cryptographic protocol implementing a $(t, n)$ threshold scheme for a secret $s$ \cite{shamir1979how}. The secret is split into $n$ unique shares such that any $t$ shares can reconstruct $s$, but any $t-1$ or fewer shares reveal no information. While Shamir's Secret Sharing provides a robust $(t, n)$ threshold scheme, for our use case, we implement a simpler $(m, m)$ additive secret sharing (ASS) scheme. This approach was chosen for its minimal computational overhead on resource-constrained devices, as it avoids complex polynomial operations. It provides confidentiality against an adversary who cannot access all $m$ shares. For a task like computing the average, each sensor $\mathcal{S}_i$ uses an ASS scheme:
\begin{enumerate}
    \item Let the secret be the encoded value $x'_i$. Let $Q$ be a prime modulus such that $Q > nq$.
    \item For $j=1$ to $m-1$, generate random shares $[[x'_i]]_j \in \mathbb{Z}_Q$.
    \item The final share is computed as $[[x'_i]]_m = x'_i - \sum_{j=1}^{m-1} [[x'_i]]_j \pmod{Q}$.
    \item Each share $[[x'_i]]_j$ is published to a distinct broadcast channel (MQTT topic) $\mathcal{B}_j$.
\end{enumerate}
The server C subscribes to all channels, collects all shares, and sums them to recover the aggregate $\sum x'_i$ for its computation. This protects data in transit and at rest on the untrusted message broker broker.

\section{Theoretical Analysis of Privacy Guarantees}
\label{sec:theory}
To ground our experimental results, we first formalize the privacy guarantees of the selected PETs.

\subsection{Quantifying Privacy Leakage in DP via Hypothesis Testing}
Interpreting the $\epsilon$ parameter of DP can be non-intuitive. We use a hypothesis testing framework to provide a concrete meaning. Consider a strong adversary who knows a dataset $X=(x_1, ..., x_{n-1}, x_n)$ except for the value of $x_n$, which they know is one of two values, $s$ or $t$ (with $s<t$). The adversary forms two competing hypotheses: $H_s: x_n = s$ and $H_t: x_n = t$.

Let the query be a simple sum, $q(X) = \sum x_i$, with sensitivity $\Delta q = \max_{x \in \mathcal{D}} |x| = \Delta x$. The DP mechanism releases a noisy answer $z = q(X) + Lap(\Delta q / \epsilon)$. The adversary observes $z$ and uses a likelihood ratio test to decide between $H_s$ and $H_t$:
\[
\frac{Pr[z|H_{t}]}{Pr[z|H_{s}]} = \frac{f_{Lap}(z-q(X_{t}))}{f_{Lap}(z-q(X_{s}))} = \frac{\exp\left(-\frac{\epsilon|z-q(X_{t})|}{\Delta q}\right)}{\exp\left(-\frac{\epsilon|z-q(X_{s})|}{\Delta q}\right)}
\]
Following the Neyman-Pearson lemma \cite{cover2006elements}, the optimal test is to decide $H_t$ if $z > \tau$, where the threshold $\tau = \sum_{i=1}^{n-1}x_{i}+\frac{s+t}{2}$. The probability of the adversary guessing correctly, $Pr[G]$, can be shown to be:
\begin{align}
Pr[G] = Pr\left[Lap(\Delta q/\epsilon)\le\frac{t-s}{2}\right] 
= 1-\frac{1}{2}\exp\left(-\frac{\epsilon(t-s)}{2\Delta q}\right) \label{eq:guess_prob}
\end{align}
This equation provides a direct link between the abstract parameter $\epsilon$ and the concrete probability of an adversary's success. A probability close to $0.5$ (random guessing) indicates strong privacy. This analysis applies to both LDP and GDP models, differing only in the context of the adversary.

\subsection{Information-Theoretic Security of the additive Secret Sharing}
The security of ASS is information-theoretic.  For an $(m, m)$ ASS scheme over $\mathbb{Z}_Q$, any coalition holding $m-1$ or fewer shares has zero information about the secret in the Shannon sense. This information-theoretic security holds provided the shares are chosen uniformly at random. Our implementation achieves this by ensuring that the first $m-1$ shares are random, which makes the final share deterministically random as well. To formalize the utility aspect in our context, we analyze an ASS scheme. Let data from $n$ sensors be encoded as integers $x'_i \in \{0, ..., q\}$. We wish to compute the average.
\begin{theorem}
Let each sensor $S_i$ split its encoded value $x'_i$ into $m$ shares over a finite field $\mathbb{Z}_Q$ where $Q > nq$. Let the server C collect all $nm$ shares, sum them modulo $Q$, divide by $n$, and decode. If $\widehat{ave}$ is the computed average and $(\sum x_i)/n$ is the true average, then
\[
\left|\frac{\sum_{i=1}^{n}x_{i}}{n}-\widehat{ave}\right|<\frac{1}{k}
\]
where $k$ is the precision level of the initial encoding.
\end{theorem}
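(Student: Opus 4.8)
The plan is to split the argument cleanly into two independent parts: showing that the secret-sharing reconstruction recovers the encoded sum \emph{exactly}, and then showing that the only remaining discrepancy is the quantization error introduced by the floor operation in the encoding. The privacy/security of the scheme plays no role here; this is purely a correctness-and-utility statement.

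First I would confirm exact reconstruction. By construction, the $m$ shares of each sensor satisfy $\sum_{j=1}^m [[x'_i]]_j \equiv x'_i \pmod Q$, so summing all $nm$ shares collected by the server gives $\sum_{i=1}^n x'_i \pmod Q$. Because each encoded value lies in $\{0,\dots,q\}$, the true integer sum obeys $\sum_{i=1}^n x'_i \le nq < Q$, so no modular reduction actually occurs and the server recovers the exact integer $\sum_{i=1}^n x'_i$. This is the single place where the hypothesis $Q > nq$ is used, and stating it explicitly is important since it is the reason the modular sum coincides with the integer sum.

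Next I would trace the decode step. Substituting $x'_i = |q_{min}| + \lfloor x_i k \rfloor$, the reconstructed sum divided by $n$ equals $|q_{min}| + \tfrac{1}{n}\sum_{i=1}^n \lfloor x_i k \rfloor$. Applying the decode map $y \mapsto (y - |q_{min}|)/k$ subtracts the single offset and divides by $k$, yielding $\widehat{ave} = \tfrac{1}{nk}\sum_{i=1}^n \lfloor x_i k \rfloor$. The point worth stating explicitly is that averaging the encoded values leaves exactly one copy of the offset $|q_{min}|$, so one application of the decode map is the correct inverse.

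Finally I would bound the error directly. The difference between the true average and $\widehat{ave}$ is $\tfrac{1}{nk}\sum_{i=1}^n (x_i k - \lfloor x_i k \rfloor)$, a normalized sum of $n$ fractional parts, each lying in $[0,1)$. Since the summed fractional parts are strictly less than $n$, the whole expression is strictly less than $1/k$, giving the claimed bound. I do not expect a genuine obstacle: the only care needed is the bookkeeping of the $|q_{min}|$ offset under averaging and tracking where the strictness of the final inequality comes from (the fractional parts being bounded strictly below $1$), after which the result is a textbook quantization estimate.
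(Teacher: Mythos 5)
Your proof is correct and follows essentially the same route as the paper's: use $Q > nq$ to conclude that the modular sum of shares equals the exact integer sum of encoded values, then bound the remaining discrepancy as the quantization error of the encoding, which is below $1/k$. Your version is simply more explicit — tracking the $|q_{min}|$ offset through the averaging and decode steps, and deriving the strict inequality from the fractional parts $x_i k - \lfloor x_i k\rfloor$ lying in $[0,1)$ — where the paper compresses these steps into a one-line remark.
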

\begin{proof}
Since $Q > nq \ge \sum x'_i$, the summation of shares modulo $Q$ is equivalent to the true sum of encoded values, $\sum x'_i$. The decoding process introduces a quantization error for each $x_i$ bounded by $1/k$. The error on the average is therefore also bounded by $1/k$.
\end{proof}
This demonstrates that ASS can be used for exact computation on encoded data while providing strong confidentiality against an adversary who cannot access a threshold number of shares.

\subsection{Mapping PETs to Threats}
The chosen portfolio of PETs provides a multi-layered defense tailored to our adversaries.

\subsubsection{Countering the Honest-but-Curious Aggregator/Vendor} Differential Privacy is ideal for statistical tasks, allowing the aggregator to receive a useful result while providing plausible deniability for any individual's contribution. ASS addresses scenarios where even the central infrastructure is untrusted. By splitting a secret across multiple MQTT topics, our scheme ensures the message broker broker itself never holds the complete data.

\subsubsection{Countering the Malicious Eavesdropper} While TLS is the primary defense, PETs provide a critical second layer. Local Differential Privacy is particularly effective, as the data is perturbed before transmission. Even if an eavesdropper breaks TLS encryption, the captured payload is already noisy. ASS significantly raises the bar for the eavesdropper, requiring them to compromise and correlate messages from multiple distinct topics to reconstruct the secret.

\section{Experimental Evaluation}
\label{sec:eval}

\subsection{Experimental Setup}

\subsubsection{Hardware Testbed} The physical testbed was constructed using commercially available hardware to mirror a realistic deployment environment. Sensor nodes, representing data producers, were implemented using ESP32-WROVER-DevKit-Lipo and ESP32-POE-ISO microcontrollers. These devices were chosen for their built-in wireless connectivity and sufficient processing power for performing on-device PETs. A Raspberry Pi 4B served as the primary data consumer and measurement device. A central computing server hosted the message broker and all the virtualized processing nodes as depicted in Figure~\ref{fig:timing-measure}.

\subsubsection{Software and Protocols}
The firmware for the ESP32 devices was compiled through the ESPHome framework, extended with custom C++ components for DP and ASS algorithms. The virtualized data processing pipeline was managed by a custom framework that ingests a GML file to deploy Docker containers. All communication used MQTTv5 with CBOR-encoded payloads, as specified by CSMIM \cite{arinc853}.

\subsubsection{Measurement Methodology}
To ensure accurate, cross-device timing, a hardware-based synchronization method was employed. Standard protocols like Simple Network Time Protoco (SNTP) had unacceptably high variance. Instead, the initiating ESP32 sets a GPIO pin at the start of its processing. This pin was wired to the Raspberry Pi (RPi), which records a start timestamp. Upon receipt of the final message, the RPi records an ending timestamp. By referencing only one clock, this method provides an accurate end-to-end latency measurement.

We acknowledge that the ESPHome framework adds a consistent computational overhead across all our experiments (See Table \ref{tab:pet_overhead}). Although custom-optimized firmware would result in lower absolute latencies, this uniform overhead does not affect the validity of our comparative results as it is applied uniformly to all experimental conditions.

\begin{figure}[ht]
    \centering
    \includegraphics[width=1\linewidth]{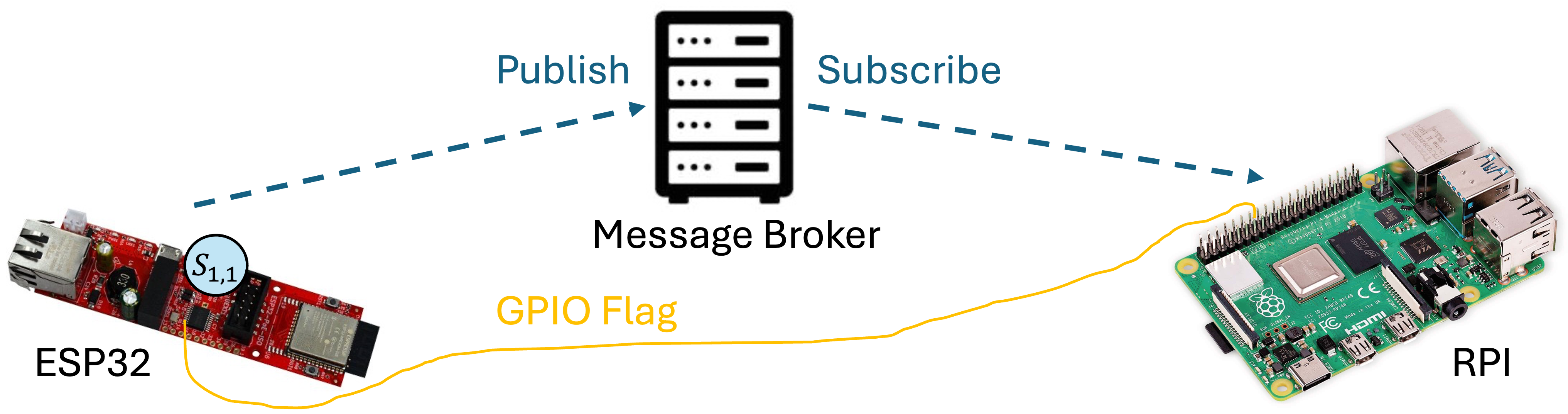}
    \caption{Diagrammatic connections of the ESP32, message broker and RPI for data over MQTTv5 and a physical timing flag used in our timing measurement methodology.}
    \label{fig:timing-measure}
\end{figure}

\subsection{Performance Analysis: Deconstructing System Latencies}
To accurately assess the overhead introduced by various Privacy-Enhancing Technologies (PETs), it is crucial to first deconstruct the inherent latencies of the system. This involves establishing a performance baseline and understanding the costs associated with the network, the central broker, and the architectural complexity of the data pipeline.

\subsubsection{On-Device vs. Virtualized Environments}
Our experiments compare two distinct processing environments, reflecting practical architectural choices in an aircraft cabin:

\textit{On-Device Environment:} Computations are performed directly on the resource-constrained ESP32 microcontroller that is co-located with the sensor. This represents an edge computing paradigm, where data is processed at the source to minimize latency and preserve privacy by preventing raw data from leaving the device. This is typical for systems where sensors and their initial processing units are integrated, such as in a smart seat.

\textit{Virtualized Environment:} Computations are performed within Docker containers hosted on a powerful, centralized server. This represents a more flexible architecture where services can be deployed and scaled dynamically. However, it requires raw data to be transmitted from a sensor to the central server, introducing additional network hops and potential privacy risks if the data is not otherwise protected in transit.

\subsubsection{Baseline Network and Broker Overheads}
The network and the central broker introduces their own delays, which must be understood before evaluating the PETs themselves. As detailed in Table \ref{tab:ping_latencies}, the physical network medium is primary source of latency. A standard Wi-Fi connection between an ESP32 and a target laptop (with no power saving) exhibited an average ping time of 8.0 ms. In contrast, direct wired Ethernet  reduces latency to under 1.0 ms.  More importantly, the central On-board message broker, a necessary component for interoperability and data governance, imposes a non-trivial delay. A single message round-trip through the broker, even over a direct wired Ethernet connection, incurred an average latency of 4.52 ms. This broker overhead is a significant and unavoidable component of any transaction, constituting a substantial portion of the latency budget for any real-time application.

\begin{table}[ht!]
\centering
\caption{Baseline System Latencies of the network when the ESP devices and a target laptop is connected under various power settings and connection types.}
\begin{tabular}{@{}rccc@{}}
\toprule
\textit{Connection Types} & \textit{Min Ping} & \textit{Avg Ping} & \textit{Max Ping} \\
\textit{(Target is the laptop)} & (ms) & (ms, n=100) & (ms) \\ \midrule
High-Power Saving (Wifi-Target) & 31 & 355 & 809 \\
Light-Power Saving (Wifi-Target) & 110 & 270 & 658 \\
No Power Saving (Wifi-Target) & 4 & 8 & 34 \\
Eth-ESP (Eth-Target) & <1 & <1 & 1 \\
Eth-ESP (Wifi-Target) & 2 & 4 & 15 \\ \bottomrule
\end{tabular}
\label{tab:ping_latencies}
\end{table}

\subsubsection{The High Cost of Communication Hops}
To isolate and quantify the latency added by architectural complexity, we measured the delay of a message passing through a chain of five virtual relay nodes hosted on the central server. Each relay simply receives a message from the message broker and immediately forwards it to the next node in the message chain. This experiment, which involves 12 distinct network hops (one publish from the source, five publish/subscribe pairs for the relays, and one subscribe at the destination), revealed an average end-to-end latency of 46.55 ms over 350 repeated runs. Since the processing time on each relay node is negligible, this substantial delay is almost entirely attributable to the repeated interactions with the central broker. This result starkly illustrates that the number of communication hops is a dominant factor in overall system latency, far outweighing the computational cost of many on-device tasks.

\subsection{Performance Overhead of Privacy-Enhancing Technologies}
With an understanding of the baseline latencies, we now analyze the specific overheads of each PET. The results, summarized in Table \ref{tab:pet_overhead}, reveal that the performance impact of a PET is determined not just by its computational complexity, but by where it is executed and how it interacts with the communication protocol.

\subsubsection{Scenario 1: Local Differential Privacy (On-Device)}
In this scenario, the ESP32 generates a value and applies Local Differential Privacy (LDP) before publishing it to the message broker. Our results show that the baseline operation (no processing) has a mean end-to-end latency of 5.295 ms (median 4.8357 ms, std. dev. 1.7347 ms), with the on-device computation taking just 0.307 ms. Applying LDP ($\epsilon$=0.01) increases the on-device compute time to 0.4875 ms. As summarized in Table \ref{tab:pet_overhead}, this represents a raw computational overhead of only 0.18 ms, demonstrating that the core LDP algorithm is lightweight and computationally feasible for resource-constrained embedded systems. While this is a 58.8\% increase in on-device processing time, its effect on the end-to-end latency is modest, increasing it to 6.2057 ms. This shows that for LDP, the network and broker latencies remain the dominant factors.

\subsubsection{Scenario 2: Global Differential Privacy (On-Device vs. Virtualized)}
This scenario compares two architectural patterns for aggregation and the application of Global Differential Privacy (GDP).

\textit{On-Device GDP:} This setup simulates a scenario where a single edge device (e.g., a zonal controller in a seat row) acts as a trusted aggregator for several simpler, nearby sensors. It collects their raw data, computes the aggregate, and applies GDP before publishing the result, thereby isolating the performance impact of the on-device computation and a single MQTT publication. The on-device computation for a simple average is 1.021 ms, which increases to 1.147 ms when GDP ($\epsilon$=1) is added. The full end-to-end latency for this on-device aggregation is 7.095 ms.
 
\textit{Virtualized GDP:} Three separate ESP32 devices publish raw values to the message broker. A virtual node on the central server subscribes to these topics, performs the aggregation and GDP, and publishes the final result. While the compute time on the powerful virtual node is negligible, the end-to-end latency balloons to 13.551 ms.

The \textasciitilde6.5 ms difference between the on-device and virtualized approaches is entirely attributable to the extra communication hops required for the virtualized model (Sensor $\rightarrow$ message broker $\rightarrow$ Virtual Node $\rightarrow$ message broker $\rightarrow$ Consumer). This provides a clear, quantitative demonstration of the significant performance benefit of performing computations at the edge, a critical consideration for systems with real-time constraints.

\subsubsection{Scenario 3: Additive Secret Sharing (On-Device vs. Virtualized)}
This scenario highlights how protocol interactions can dominate performance.

\textit{On-Device ASS:} An ESP32 splits a value into three shares and publishes each to a separate topic. The on-device computation is very fast, averaging just 0.591 ms. However, the end-to-end latency is 14.995 ms. This is because our implementation sequentially publishes the three shares, incurring the broker's round-trip overhead multiple times. This is a straightforward implementation; a more optimized approach using asynchronous MQTT clients could potentially parallelize the publishing of shares, significantly reducing this protocol-induced latency. However, this highlights the sensitivity of such protocols to implementation choices.
  
\textit{Virtualized ASS:} When the secret sharing is moved to a virtual node, the latency increases dramatically. The virtual node itself takes a significant 17.265 ms to perform the operation (which includes receiving the input and sending three outputs), and the total end-to-end latency reaches 35.707 ms.

This scenario illustrates that system performance is often dictated more by the number and nature of protocol interactions with the central broker than by the raw computational complexity of the privacy algorithm itself. The scheme guarantees reconstruction of the original sum if all shares are received, ensuring maximum utility. However, this method is inherently fragile. The loss of even a single share due to network failure results in a total inability to recover the secret. In large-scale systems, this risk of a single point of failure increases, threatening the entire aggregation.

\subsubsection{Scenario 4: System Scalability Under Load}
To test the robustness of the system, we re-ran the LDP scenario while subjecting the message broker to a high load of 400 additional messages per second (to represent an real cabin scenario). The results showed no significant degradation in either the on-device compute time (0.4875 ms vs. 0.4476 ms) or the end-to-end latency (6.2057 ms vs. 5.3893 ms). This indicates that, when adequately provisioned, the central broker is not a bottleneck under the tested conditions, reinforcing the conclusion that latency is primarily driven by the network medium and the number of communication hops in the data path.

\begin{table}[ht!]
\centering
\caption{Performance Overhead of PETs (On-Device vs. Virtualized).}
\begin{tabular}{@{}rcc@{}}
\toprule
\textit{PET Configuration} & \textit{Node Compute}  & \textit{End-to-End} \\ 
 & \textit{Time (ms)} & \textit{Latency (ms)} \\ \midrule
No Processing (Baseline) & 0.307 & 5.295 \\
LDP (On-Device) & 0.488 & 6.206 \\
GDP (On-Device) & 1.147 & 7.095 \\
GDP (Virtualized) & (Negligible) & 13.551 \\
ASS (On-Device) & 0.591 & 14.995 \\
ASS (Virtualized) & 17.265 & 35.707 \\ \bottomrule
\end{tabular}
\label{tab:pet_overhead}
\end{table}

\subsection{Utility Analysis: Quantifying the Privacy-Utility Trade-off}
The following experiments quantify the fundamental \textit{Privacy-Utility Trade-off} for Differential Privacy.

\subsubsection{Statistical Tasks - Aggregate Load Calculation}
This first scenario simulates a practical statistical task: Calculating the total weight of passengers on an aircraft for fuel load estimation. A single ground truth dataset of 500 passenger weights was created by uniformly sampling a value between 50 and 120 Kg, with a resulting total sum of 43,007 Kg. Local Differential Privacy (LDP) in varying levels of the privacy parameter, $\epsilon$ was applied to each individual's weight before a final sum. The utility was measured as the absolute error between the noisy, yet privacy-preserving sum and the true sum.
    
Figure~\ref{fig:sum_weight}, demonstrates the direct and predictable relationship between $\epsilon$ and error against the true sum to provide a concrete tool for system designers. For example, if an airline's operational tolerance for fuel calculation is a 50 Kg error (a mere 0.116\% of the total), the data shows this can be achieved with an $\epsilon$ value of approximately 0.3. This directly links the abstract mathematical parameter of privacy to a tangible business requirement.

\begin{figure}[ht]
    \centering
    \includegraphics[width=0.9\linewidth]{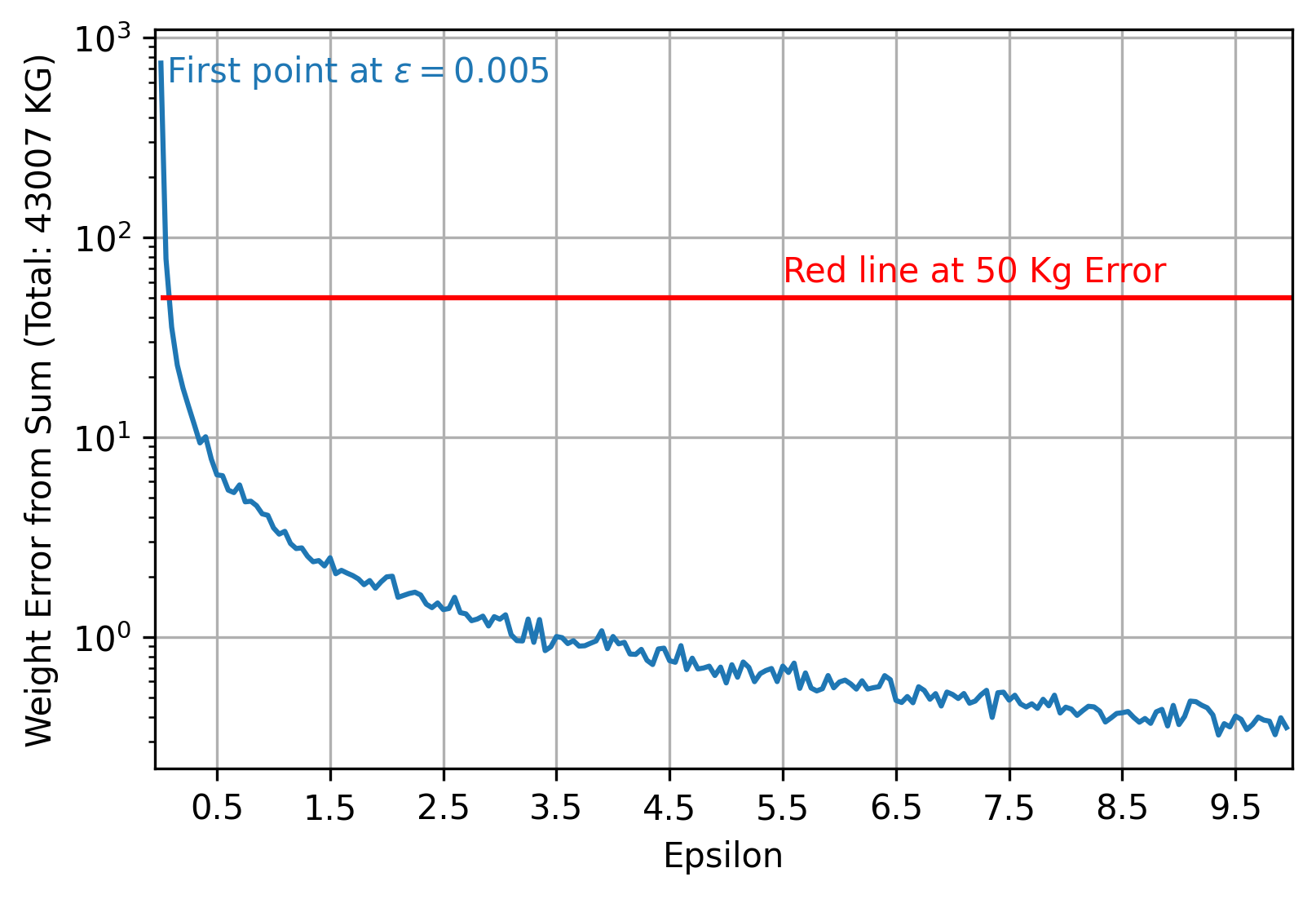}
    \vspace{-4mm}
    \caption{Relationship between $\epsilon$ and error against the true weight sum. In a high privacy setting (low $\epsilon$, e.g., 0.05), the error is substantial, on the order of 100 Kg. As the privacy guarantee is relaxed (higher $\epsilon$), the error decays exponentially, falling to 10 Kg for just before $\epsilon = 0.5$. }
    \label{fig:sum_weight}
\end{figure}

\subsubsection{Privacy-Preserving Inference Task - Seatbelt Detection ML Model}

We evaluate a system for classifying passenger seatbelt status, a critical component for automotive safety applications. The goal is to infer one of three states from sensor data: (1) the seat is unoccupied, (2) a passenger is present but not wearing a seatbelt, or (3) a passenger is present and wearing a seatbelt. The raw X, Y, and Z-axis accelerometer data was sourced from an experimental setup presented by Goyal et al. \cite{Goyal2023PassengerSafety}. 
The raw accelerometer data is a primary privacy risk because this biometric signal can be used to infer highly sensitive personal information, including biometric ID from a unique vibrational fingerprint, health data from micro-movements like tremors or breathing, and physical attributes like weight or BMI. 

A 3-layer  Convolutional Neural Network (CNN) with 25 hidden neurons on each layer trained to classify accelerometer data into three-classes as discussed above. The baseline model trained on a clean 80/20 train/test split achieved a high accuracy of 98.28\%.  Following this baseline, our methodology evaluates the model's utility in a real-time inference scenario known as the Untrusted Scrutinizer model. This simulates a common and valid deployment pattern where a service provider running the model cannot be trusted with a user's raw, sensitive data. To protect the user, the raw sensor data is privatized using Local Differential Privacy (LDP) on the user's device before it is ever sent for classification.

With strong privacy (low $\epsilon$), the model's performance collapses. At $\epsilon=0.5$, accuracy plummets to around 65\%, rendering a model useless for any practical safety or service application. As $\epsilon$ increases, accuracy recovers, crossing a hypothetical acceptable threshold of 95\% at an $\epsilon$ value of approximately 4.0 before plateauing near the original baseline. This highlights the critical need for careful tuning of $\epsilon$ in ML contexts. Too strong of a privacy guarantee destroys utility, while too large an $\epsilon$ may not provide any meaningful privacy guarantee, corresponding to a higher probability of an adversary successfully inferring a user's input data \cite{arinc853}. It should be noted that while higher values of $\epsilon$ (e.g., > 4.0) improve utility, they offer progressively weaker privacy guarantees, and their suitability must be carefully weighed against the specific threat model and regulatory requirements.

\subsubsection{Intellectual Property Protection - Coffee Machine Brewing Profile}
The final scenario considers a high-end coffee machine's unique temperature profile during a brewing cycle as valuable IP to be protected. A competitor with access to the CSMIM bus could analyze this data to reverse-engineer the process. To counter this, the vendor can apply LDP to the temperature sensor data before publishing.

The choice of $\epsilon$ becomes a strategic business decision. A low $\epsilon$ (e.g., 0.01) adds significant noise, completely obscuring the unique signature of the brewing cycle. This provides strong IP protection but may mask subtle thermal anomalies required for predictive maintenance. Conversely, a high $\epsilon$ (e.g., 1.0) adds minimal noise, preserving the profile's utility for detailed diagnostics but risking the exposure of the proprietary curve shape to a competitor. This scenario exemplifies the direct trade-off between IP protection and data utility for maintenance.

\section{Discussion and Conclusion}
\label{sec:conclusion}
This paper has demonstrated that while standards like ARINC 853 provide a secure foundation, a configurable application-level layer of PETs is essential to bridge the privacy gap in multi-vendor IoT ecosystems. Our empirical evaluation on a high-fidelity hardware testbed provides novel, quantitative data on the real-world trade-offs between privacy, utility, and performance. The findings show that it is feasible to integrate these technologies into a modern aviation data framework, with performance being dictated more by architectural choices and network hops than by the computational overhead of the PETs themselves. 

The experimental results reveal that selecting and deploying PETs is a complex engineering trade-off that leads to several key architectural principles. The performance results create a new design paradigm for real-time private systems: the focus must shift from micro-optimizing the PET algorithm's CPU cycles to macro-optimizing the data flow architecture to minimize network hops and protocol interactions. No single PET is universally best; Differential Privacy is ideal for statistical analysis due to its low overhead but directly impacts utility, while the implemented ASS scheme provides strong confidentiality but incurs significant latency from its protocol interactions.

The results underscore two critical architectural principles. First, the primacy of physical topology: hardware layout decisions made early in the design lifecycle have lasting implications for privacy capabilities. Second, for latency-sensitive tasks, on-device processing is significantly faster than virtualized processing due to the high cost of network hops. The investigation also highlights a crucial distinction between privacy  and secrecy, suggesting that the future of trustworthy data collaboration likely lies in hybrid models.

For future investigation, one path is to explore dedicated cryptographic hardware in embedded systems to reduce the overhead of more intensive PETs like Homomorphic Encryption. Another is to design systems capable of adjusting PET settings dynamically in response to changes in network performance, trust, or data importance. It is also vital to collaborate with regulatory and standards organizations to establish a clear certification process for PET-equipped systems in commercial aviation. Lastly, a critical next step is to empirically validate the scalability of these protocols, evaluating how the end-to-end latency of multi-party schemes scales as the number of participating sensors, $n$, increases from tens to hundreds.

By addressing these challenges, the industry can move closer to realizing the full potential of the connected aircraft cabin, building ecosystems that are not only intelligent but also secure, private, and trustworthy by design.

\bibliographystyle{IEEEtran}
\bibliography{ref.bib}
\end{document}